\documentclass[a4paper]{article}

\usepackage{fullpage,url,amsmath,amsfonts,amssymb,tedmath}

\newcommand{\mC}{\mathcal{C}}
\parskip .14cm
\date{}

\title{MDS codes over finite fields}
\author{Ted Hurley\footnote{National University of Ireland
Galway. Ted.Hurley@NuiGalway.ie}}
\begin{document}
\maketitle
\begin{abstract}
The mds (maximum distance separable) conjecture claims that  a nontrivial
linear mds $[n,k]$ code over the finite field $GF(q)$ 
satisfies  $n \leq (q + 1)$, except when $q$ is even and $k = 3$ or
$k = q- 1$ in which case it satisfies $n \leq (q + 2)$.

For given field $GF(q)$ and any given $k$, series of mds $[q+1,k]$ codes
 are constructed.   

Any $[n,3]$ mds or $[n,n-3]$ mds code over $GF(q)$ must satisfy $n\leq (q+1)$  for $q$ odd and $n\leq (q+2)$ for $q$ even. For even  $q$, mds $[q+2,3]$ and mds $[q+2, q-1]$ codes are
 constructed over $GF(q)$. 

The codes constructed have efficient encoding and  decoding algorithms.
\end{abstract}
\section{Introduction}

Background on coding theory and related material made be found in
\cite{Mac} or in \cite{blahut}. Now $GF(q)$ denotes the finite field of order $q$ and $q$  is necessarily a  power of a prime. 
 An $[n, k]$ linear code over  $GF(q)$ is a  linear code
$\mC$  of length $n$ and dimension $k$ over $GF(q)$. 

The minimum distance $d$ of $\mC$ is
bounded by the Singleton bound $d \leq  (n + 1 - k)$.  If $d = (n + 1- k)$, then
the code $\mC$ is termed  a {\em maximum distance separable} (mds)
code. The mds codes are those with maximum 
error correcting capability for a given length and dimension. 
MacWilliams and
Sloane refer to  mds codes in their book \cite{Mac} as ``one of the most fascinating chapters in
all of coding theory'';  mds codes are equivalent to geometric objects
called {\em $n$-arcs} and combinatorial objects called {\em orthogonal arrays},
\cite{Mac},  and  are, quote,  ``at the heart of combinatorics and finite
geometries''.

The {\em mds conjecture} is due originally   
 to  Segre \cite{segre} from  1955.

{\bf mds conjecture:} If $\mC$ is a nontrivial
linear mds $[n,k]$ code over $GF(q)$, 
 then $n \leq (q + 1)$, except when $q$ is even and $k = 3$ or
$k = (q- 1)$ in which case $n \leq (q + 2)$.

There is a large literature focusing on this problem, for example see \cite{blok,ald,Thas}. 
Ball showed \cite{Ball} that the mds conjecture is true for prime fields.  
For a list of
when the conjecture is known to hold for q non-prime, see \cite{hirs,hirs1}

Here for any given finite field $GF(q)$ and any given $k$, series of mds
$[q+1,k]$ codes over $GF(q)$ are constructed.

Methods in \cite{hurley}
can be adopted to give efficient decoding algorithms; the complexity is
$\max\{O(\log n), t^2\}$ where $n$ is the length and $t$ is the
error-correcting capability which is $\floor{\frac{d-1}{2}}$ where $d$
is the distance.

For even  $q$, it is shown that any $[n,3]$ mds code and any $[n,n-3]$
mds code over $GF(q)$ must satisfy $n\leq (q+2)$ and for odd $q$ any
$[n,3]$ or $[n,n-3]$ mds code over $GF(q)$ must satisfy $n\leq (q+1)$.  
For even $q$, series of $[q+2,3]$ mds codes and $[q+2,q-1]$ mds codes over
$GF(q)$ are constructed. 


The mds codes constructed over prime fields are
maximum length for the field. The more general case is
dealt with separately.
\section{Basics}\label{basic}
A primitive $n^{th}$ root of unity in a field $\F$ is an element $\om$ such that $\om^n=1$ but $\om^i\neq 1$ for $1\leq i < n$. 

In a finite field $GF(q)$, a primitive $(q-1)$ root of unity always exists; see for example \cite{blahut,Mac} or any book on field theory.
Let $\om$ be a primitive element in $GF(q)$ so that $\om$ has order
$q-1=t$.  Then $S=\{1=\om^0, \om,\om^2,\ldots, \om^{t-1}\}$ are the
distinct elements of $GF(q)/\{0\}$,  $\om^t=1$ and $\om^{i} \neq 1$ for $1\leq i<t$.

See for example \cite{blahut} or \cite{Mac} for the following result. A
$k\ti n$ matrix $G$ is the generator matrix of an mds $[n,k]$ code if
and only if any $k\ti k$ submatrix of $G$ has non-zero determinant. Also
a $k\ti n$ matrix is a check matrix of a $[n,n-k]$ mds code if and only if
any $k\ti k$ submatrix has  non-zero determinant.

An   $[n,k]$ code is an mds code if and only if its dual is an $[n,n-k]$ mds code,  \cite{Mac,blahut}. 

Recall the mds codes constructed in \cite{hurley}.

 A Fourier matrix is a special type of
Vandermonde matrix. Let $\om$ be a primitive $n^{th}$ root of unity in a
field $\F$; primitive here means that $\om^n=1$ and $\om^i \neq 1$ for $1\leq i < n$. The Fourier matrix $F_n$, relative to $\om$ and $\F$, is the $n\ti n$ matrix  

$$ F_n= \begin{pmatrix}1 & 1 & 1& \ldots & 1 \\ 1 & \om & \om^2 & \ldots &
	 \om^{n-1} \\ 
1 & \om^2 & \om^4 & \ldots & \om^{2(n-1)} \\ \vdots & \vdots & \vdots &
    \ldots & \vdots \\ 1 & \om^{n-1} & \om^{2(n-1)} & \ldots &
    \om^{(n-1)(n-1)} \end{pmatrix}$$

Simplifications can be made to some of the powers from $\om^n=1$. An $n^{th}$ root of unity can only
exist in a field provided the characteristic of the field does not
divide $n$ and in this case $n^{-1}$ exists. 

Then  $$\begin{pmatrix}1 & 1 & 1& \ldots & 1 \\ 1 & \om & \om^2 & \ldots & 
	 \om^{n-1} \\ 
1 & \om^2 & \om^4 & \ldots & \om^{2(n-1)} \\ \vdots & \vdots & \vdots &
    \ldots & \vdots \\ 1 & \om^{n-1} & \om^{2(n-1)} & \ldots &
    \om^{(n-1)(n-1)} \end{pmatrix} \begin{pmatrix}1 & 1 & 1& \ldots & 1 \\ 1
				    & \om^{n-1} & \om^{2(n-1)} & \ldots
				    & \om^{(n-1)(n-1)}\\ 
1 & \om^{n-2} & \om^{2(n-2)} & \ldots &\om^{(n-1)(n-2)} \\ \vdots & \vdots & \vdots &
    \ldots & \vdots \\ 1 & \om & \om^{2} & \ldots &
    \om^{(n-1)} \end{pmatrix} = nI_n$$

The inverse of $F_n$ can be obtained from the above by multiplying
through by $n^{-1}$ when it exists. 

Recall the following from \cite{hurley}:

\begin{theorem}\label{genthmfour}{\cite{hurley}} 

(i) Let $F_n$ be a Fourier $n\ti n$ matrix over a field $\F$. 
 Let $\mathcal{C}$ be a 
 code obtained by choosing in order $r$ rows of $F_n$ in
 arithmetic sequence with arithmetic 
difference $k$ satisfying   $\gcd(n,k) = 1$. Then
$\mathcal{C}$ is an mds $[n,r,n-r+1]$ code.

In particular this is true when $k=1$, that is, when the $r$ rows are
 chosen in succession.  

(ii) Let $\mathcal{C}$ be as in part (i). Then there exist explicit efficient encoding and decoding algorithms for $\mathcal{C}$.

\end{theorem}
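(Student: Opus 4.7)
The plan is to apply the determinantal MDS criterion recalled in Section~\ref{basic}: an $r \times n$ matrix generates an mds $[n,r]$ code precisely when every $r \times r$ submatrix is non-singular. I would take the generator matrix $G$ whose rows are rows $a, a+k, a+2k,\ldots, a+(r-1)k$ of $F_n$ (indexing modulo $n$), so that the $(i,j)$ entry of $G$ is $\om^{(a+ik)j}$ for $0 \leq i \leq r-1$ and $0 \leq j \leq n-1$. The task then reduces to a determinant computation on an arbitrary set of $r$ columns.

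The main step is to fix columns $j_1 < j_2 < \cdots < j_r$ and analyse the resulting $r \times r$ submatrix $M$. Pulling the scalar $\om^{a j_s}$ out of column $s$ (a nonzero factor since $\om$ is a unit) reduces $M$ up to a unit to the matrix with entries $\om^{i k j_s} = (\om^{k j_s})^{i}$. Writing $\beta_s = \om^{k j_s}$, this is the Vandermonde matrix $(\beta_s^{\,i})_{0 \leq i \leq r-1}$, whose determinant is $\prod_{s<t}(\beta_t - \beta_s)$.

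The key sub-claim is that the $\beta_s$ are pairwise distinct, and this is exactly where the hypothesis $\gcd(n,k)=1$ enters. If $\beta_s = \beta_t$ then $\om^{k(j_s - j_t)} = 1$, so $n \mid k(j_s - j_t)$; coprimality forces $n \mid (j_s - j_t)$, contradicting $0 \leq j_s \neq j_t \leq n-1$. Hence $\det M \neq 0$, every $r \times r$ submatrix of $G$ is non-singular, and $\mathcal{C}$ is mds with parameters $[n,r,n-r+1]$ by the Singleton bound. The case $k=1$ is immediate since $\gcd(n,1)=1$ holds unconditionally.

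For part (ii), I would appeal to the fact that $G$ is a submatrix of the Fourier matrix $F_n$: encoding is a partial discrete Fourier transform, and the parity-check matrix is formed from the complementary rows of $F_n$, so syndromes of a received word are a subset of its Fourier coefficients. Standard BCH/Reed--Solomon style syndrome decoding (Berlekamp--Massey or key-equation solving) then yields the $\max\{O(\log n), t^2\}$ complexity claimed in the introduction, as worked out in \cite{hurley}. The only real obstacle is bookkeeping around the arithmetic progression of row indices; the MDS property itself falls out cleanly from the Vandermonde identity once the coprimality condition is used.
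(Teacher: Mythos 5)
Your argument is correct. The paper itself does not prove Theorem~\ref{genthmfour} --- it is recalled from \cite{hurley} without proof --- but your Vandermonde reduction is the standard argument behind it: factoring $\om^{aj_s}$ out of each column leaves the matrix $(\beta_s^{\,i})$ with $\beta_s=\om^{kj_s}$, and the coprimality hypothesis is used exactly where it must be, to show the $\beta_s$ are distinct so that $\prod_{s<t}(\beta_t-\beta_s)\neq 0$; the $r\times r$ nonsingularity criterion quoted in Section~\ref{basic} then gives the mds property. Part (ii) is likewise consistent with the paper's appeal to the syndrome/key-equation machinery of \cite{hurley}, so there is nothing to correct.
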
  

Thus series of mds codes are formed from rows of a Fourier matrix
using this {\em unit-derived} method developed initially in \cite{hur1}. 

It is possible to choose rows which  `wrap over' and still get an
 mds code as long as the arithmetic difference $k$ between the rows
 is the same and satisfies
 $\gcd(k,n) = 1$  --  consider row $n+i$ the same as row $i$.

 Some of the methods of \cite{hurley} are generalizations of those of \cite{hurley1} but the papers are  independent.
 
In particular the following mds  codes are formed
from a finite field $GF(q)$ by using the primitive $(q-1)$ root. 

\begin{theorem}\label{1}(See \cite{hurley}) Let $GF(q)$ be a finite field and
 $\om$ a primitive $(q-1)$ root of unity in $GF(q)$. Form the Fourier
 $(q-1)\ti (q-1)$ matrix relative to $\om$: 
$$ F_{q-1}= \begin{pmatrix}1 & 1 & 1& \ldots & 1 \\ 1 & \om & \om^2 & \ldots &
	 \om^{q-2} \\ 
1 & \om^2 & \om^4 & \ldots & \om^{2(q-2)} \\ \vdots & \vdots & \vdots &
    \ldots & \vdots \\ 1 & \om^{q-2} & \om^{2(q-2)} & \ldots &
    \om^{(q-2)(q-2)} \end{pmatrix}$$
  Then choosing any $r$ rows of $F_{q-1}$ in arithmetic sequence with difference $k$ satisfying $\gcd(q-1,k)=1$ gives a generator matrix for an $[q-1,r]$ mds code. In particular taking consecutive rows gives an mds code.

  Further there exist explicit efficient encoding and decoding algorithms for the codes. 
\end{theorem}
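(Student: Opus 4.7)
The plan is to deduce Theorem \ref{1} as a direct specialization of Theorem \ref{genthmfour} to the setting $\F = GF(q)$, $n = q-1$. The main task is just to verify that the hypotheses of Theorem \ref{genthmfour} are satisfied in this particular instance, after which both the mds property and the existence of efficient encoding/decoding algorithms are immediate from parts (i) and (ii) of that theorem.

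First I would confirm that $F_{q-1}$ as displayed is a genuine Fourier matrix in the sense defined earlier. This requires $\omega$ to be a primitive $(q-1)^{th}$ root of unity in $GF(q)$, which follows from the standard fact (recalled in the Basics section) that the multiplicative group $GF(q)^*$ is cyclic of order $q-1$, so a generator $\omega$ satisfies $\omega^{q-1}=1$ and $\omega^i\neq 1$ for $1\le i<q-1$. I would also note in passing that the characteristic $p$ of $GF(q)$ does not divide $n = q-1$ (since $q = p^m$ and $\gcd(p^m, p^m-1)=1$), so the inverse Fourier matrix exists as described in Section \ref{basic}; this is needed for the decoding algorithm to be available in part (ii).

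Next I would apply Theorem \ref{genthmfour}(i) with these parameters: taking any $r$ rows in arithmetic progression with common difference $k$ coprime to $n = q-1$ produces a generator matrix of an mds $[q-1, r, q-r]$ code. The consecutive-row case corresponds simply to $k = 1$, which trivially satisfies $\gcd(k, q-1)=1$. Finally, Theorem \ref{genthmfour}(ii) furnishes the explicit efficient encoding and decoding algorithms for precisely the codes constructed in this way.

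The step that looks most like a potential obstacle — showing that the selected $r \times (q-1)$ submatrix has all $r \times r$ minors nonsingular — is not re-done here; it is already encapsulated in Theorem \ref{genthmfour}, which is invoked as a black box. Thus the ``proof'' of Theorem \ref{1} reduces essentially to matching parameters, and no new calculation is required beyond the primitivity check above.
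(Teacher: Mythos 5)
Your proposal is correct and matches the paper's treatment: the paper gives no separate proof of Theorem \ref{1}, presenting it precisely as the specialization of Theorem \ref{genthmfour} (recalled from \cite{hurley}) to $\F = GF(q)$, $n = q-1$, with $\om$ a primitive element. Your parameter-matching and the primitivity/characteristic checks are exactly the verification implicitly required, so nothing further is needed.
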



Consider cases where the first $r$ rows are chosen; other cases are similar. 

$A= \begin{pmatrix}1 & 1 & 1& \ldots & 1 \\ 1 & \om & \om^2 & \ldots &
	 \om^{q-2} \\ 
1 & \om^2 & \om^4 & \ldots & \om^{2(q-2)} \\ \vdots & \vdots & \vdots &
    \ldots & \vdots \\ 1 & \om^{r-1} & \om^{2(r-1)} & \ldots &
    \om^{(r-1)(q-2)} \end{pmatrix}$

This is a generator matrix for an $[q-1,r]$ mds code and  $A$ is an $r\ti (q-1)$ matrix.

Any $r\ti r$ submatrix of $A$ has non-zero determinant as $A$ generates an mds code. 

Now extend the length $(q-1)$ to $(q-1+2)=(q+1)$ as follows to  obtain an
  mds code $[q+1,r]$ code. 
  Extend $A$ by adding on two further $r\ti 1$ columns $v=(1,0, \ldots, 0)\T, w=(0,0,\ldots, 0,1)\T$ to obtain the $r\ti (q+1)$ matrix $B=(v,w,A)$. 



\begin{theorem}\label{main} The matrix $B$ generates an mds $[q+1,r]$ code.
\end{theorem}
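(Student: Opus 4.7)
The plan is to apply the mds criterion recalled in Section~\ref{basic}: it suffices to show that every $r\times r$ submatrix of $B$ is nonsingular. Since $B$ has exactly $r$ rows, such a submatrix is specified by a choice of $r$ columns from the $q+1$ columns of $B$, and I would split into four cases according to how many of the special columns $v,w$ are included among those $r$ columns.

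If neither $v$ nor $w$ is chosen, the submatrix is an $r\times r$ submatrix of $A$, which is nonsingular by Theorem~\ref{1}. If $v$ is chosen but not $w$ (respectively $w$ but not $v$), Laplace expansion along that column, which has a single nonzero entry equal to $1$, reduces the computation to the determinant of an $(r-1)\times(r-1)$ submatrix of $A$ with its first row (respectively its last row) deleted. If both $v$ and $w$ are chosen, two successive Laplace expansions reduce the computation to an $(r-2)\times(r-2)$ submatrix of $A$ with both its first and its last row deleted.

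In each reduced case the surviving matrix has the form $(\omega^{ij_k})$, where the row exponent $i$ ranges over a contiguous block of the form $\{0,\ldots,s-1\}$ or $\{1,\ldots,s\}$ and $j_1<\cdots<j_s$ are the chosen column indices from $\{0,1,\ldots,q-2\}$. When the block starts at $0$ this is directly a Vandermonde matrix in the nodes $\omega^{j_k}$; when it starts at $1$ one pulls $\omega^{j_k}$ out of column $k$ (which still makes sense when $j_k=0$, since the factor is then just $1$) to obtain the same Vandermonde. The resulting Vandermonde determinant $\prod_{k<l}(\omega^{j_l}-\omega^{j_k})$ is nonzero, because $\omega$ has order $q-1$ and the $j_k$ are distinct elements of $\{0,1,\ldots,q-2\}$, so the $\omega^{j_k}$ are distinct elements of $GF(q)^{*}$.

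The computations in each case are routine; the one thing to be careful about is the bookkeeping of which row of $A$ disappears after each expansion, and the harmless but slightly awkward point that one column of the reduced matrix may correspond to $j_k=0$. Assembling the four cases yields that every $r\times r$ submatrix of $B$ is nonsingular, and hence $B$ generates an mds $[q+1,r]$ code.
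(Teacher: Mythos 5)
Your proposal is correct and follows the same overall strategy as the paper's proof: verify that every $r\times r$ submatrix of $B$ is nonsingular, splitting into cases by how many of the two special columns $v,w$ appear, and using Laplace expansion along those columns (each having a single entry $1$) to strip off the first and/or last row of the corresponding submatrix of $A$. The one place you diverge is in how the reduced determinants are shown to be nonzero. The paper at this point invokes the externally cited fact (Theorem~\ref{genthmfour}, from \cite{hurley,hurley1}) that any square submatrix built from consecutive rows of the Fourier matrix is nonsingular, and simply observes that the surviving rows after expansion are again consecutive rows of $F_{q-1}$. You instead prove that fact on the spot: when the surviving row exponents form a block starting at $1$ rather than $0$, you factor $\omega^{j_k}$ out of column $k$ (a nonzero factor, harmless even when $j_k=0$) to reduce to a genuine Vandermonde matrix in the distinct nodes $\omega^{j_1},\ldots,\omega^{j_s}\in GF(q)^{*}$, whose determinant $\prod_{k<l}(\omega^{j_l}-\omega^{j_k})$ is visibly nonzero. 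This makes your argument self-contained where the paper's is not, at the cost of a little extra bookkeeping; it also makes explicit why the special columns must be $(1,0,\ldots,0)\T$ and $(0,\ldots,0,1)\T$ rather than arbitrary standard basis vectors, namely that deleting an interior row of $A$ would leave a non-contiguous set of exponents to which the Vandermonde reduction does not apply.
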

\begin{proof} This is proved by showing that any $r\ti r$ submatrix of
 $B$ has non-zero determinant. 

If the $r\ti r$ submatrix is from $A$, only, then it has non-zero determinant
 since $A$ generates an mds $[q-1,r]$ code. 

Consider the case where the $r\ti r$ submatrix $P$ is formed by taking the
 first column of $B$ together with $(r-1)$ columns of $A$.

Then $P = (v, A')$ where $A'$ is an $r\ti (r-1)$ of $A$ with rows of
 $F_{q-1} $ in sequence. 

Evaluate  the determinant of $P$ by expanding via  the first column and get
 $\det(P) = \det(A^{''})$ where $A^{''}$ is an $(r-1)\ti (r-1)$ 
submatrix of $F_{q-1}$ with rows in sequence. Then $\det(A^{''}) \neq 0$ and
 so $\det(P) \neq 0$. 

Similarly the case where an $r\ti r$ submatrix formed by taking the
 second column of $B$ with $(r-1)$ columns  of $A$ can be shown to have
 non-zero determinant. 

Now form $Q=(v,w, A^{'''})$ where $A^{'''}$ consists of $(r-2)$ columns of
 $A$. Expand by first column and get that $\det(Q) = \det(w^{'},A^{iv})$
 where $w^{'}$ is $w$ with first zero omitted and $A^{iv}$ is $(r-1)\ti (r-2)$
 submatrix of $A$ with rows in sequence from $F_n$. Now expand by first
 column and get that $\det(Q) = \pm \det(A^v)$ where $A^v$ is from
 $F_n$ with rows in sequence and so $\det(A^v) \neq 0$. Hence $\det(Q)
 \neq 0$ as required.

\end{proof}

The result depends on the fact that any (square) $y\ti y$ submatrix of
$y$ rows {\em in
sequence} of the Fourier matrix have non-zero determinant
\cite{hurley1,hurley}; this requires the $\{v,w\}$ to have the forms 
given  -- with one starting with $1$ and the other ending with $1$ and all  other entries equal to zero.  

It is clear from Theorem \ref{genthmfour} that  the $A$ may be chosen  by taking $r$ 
rows of $F_{q-1}$ in arithmetic sequence with difference $k$ satisfying  $\gcd(k,q-1)=1$. 

Encoding and decoding is obtained by adapting the methods in \cite{hurley}
to the present situation.

More generally get the following result: 

\begin{theorem}\label{main1} Given the finite field $GF(q)$, form the Fourier
 $(q-1)\ti (q-1)$ matrix $F_{q-1}$ using a primitive $(q-1)$ element in
 $GF(q)$. Form the $r\ti (q-1)$ matrix $A$ by choosing $r$ rows of $F_{q-1}$ in
 arithmetic sequence with arithmetic difference $k$ satisfying
 $\gcd(k,q-1)=1$. Let $v=(1,0, \ldots, 0)\T, w=(0,0,\ldots, 0,1)\T$
 where these are of size $r\ti 1$. Let $B$ be the $r \ti (q+1)$ matrix
 obtained by adding $\{v,w\}$ as columns to $A$. Then $B$ is the generator
 matrix of an $[q+1,r]$ mds code.
\end{theorem}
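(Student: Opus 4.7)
The plan is to follow verbatim the argument used for Theorem \ref{main}, replacing every appeal to ``rows in sequence'' by the more general ``rows in arithmetic progression with common difference $k$, $\gcd(k,q-1)=1$''. The key enabling fact is already in place: Theorem \ref{genthmfour}(i) tells us that \emph{any} $s$ rows of $F_{q-1}$ in arithmetic progression with common difference $k$ coprime to $q-1$ form the generator matrix of an mds $[q-1,s]$ code, so every $s\times s$ column-selection has non-zero determinant. This is precisely the closure property that lets the Theorem \ref{main} argument generalise with no real change.

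As before, it suffices to show that every $r\times r$ submatrix of $B=(v,w,A)$ is non-singular, and I would split into three cases according to how many of the added columns $v$, $w$ appear. If neither $v$ nor $w$ appears, the submatrix is an $r\times r$ column-selection of $A$, and non-singularity is immediate from Theorem \ref{genthmfour}(i) applied to $A$ itself. If exactly one of them appears, say $P=(v,A')$ with $A'$ an $r\times(r-1)$ column-selection of $A$, expand $\det(P)$ along the first column: this reduces to $\pm$ the determinant of the $(r-1)\times(r-1)$ matrix obtained by deleting the first row of $A'$. But the remaining rows are $r-1$ rows of $F_{q-1}$ still in arithmetic progression with common difference $k$, so by Theorem \ref{genthmfour}(i) any $(r-1)\times(r-1)$ column-selection is non-singular. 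The case with $w$ in place of $v$ is symmetric; one expands along the bottom entry of $w$ and deletes the last row instead.

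Finally, if both $v$ and $w$ appear, the submatrix has the form $Q=(v,w,A''')$ with $A'''$ an $r\times(r-2)$ column-selection of $A$. Expanding $\det(Q)$ along the first column and then along the last column of the result reduces it to $\pm$ the determinant of the $(r-2)\times(r-2)$ matrix obtained by deleting the first and last rows of $A'''$. Once again the surviving rows are $r-2$ rows of $F_{q-1}$ in arithmetic progression with common difference $k$, and Theorem \ref{genthmfour}(i) delivers non-singularity. There is no serious obstacle here: the only new observation beyond the proof of Theorem \ref{main} is that deleting the first row, the last row, or both from a list of rows in arithmetic progression with common difference $k$ leaves a shorter list still in arithmetic progression with the same $k$, so the hypothesis $\gcd(k,q-1)=1$ is preserved throughout and Theorem \ref{genthmfour}(i) keeps applying at each step.
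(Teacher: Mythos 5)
Your proposal is correct and follows essentially the same route as the paper: the paper proves Theorem \ref{main} by exactly this three-case cofactor expansion along $v$ and $w$, and then obtains Theorem \ref{main1} by the very observation you make, namely that Theorem \ref{genthmfour}(i) applies equally to rows in arithmetic progression with difference $k$ coprime to $q-1$, and that deleting the first and/or last row of such a progression leaves a progression with the same difference. The only cosmetic difference is that you spell out this closure property explicitly, which the paper leaves as a remark after Theorem \ref{main}.
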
    

Moreover the methods in \cite{hurley} may be adopted to give efficient
encoding and decoding algorithms for the code generated by $B$. The
complexity of this is $\max\{O(n\log n), t^2\}$ where $t= \floor{\frac{d-1}{2}}$
with $d(=q-r +2)$ is the distance of the mds code $[q+1,r]$. 

\paragraph{Samples}

\begin{enumerate}

\item Let the field be $GF(3^2)$. The examples from this small field may be obtained directly   but are chosen to illustrate the general methods.  Let $\om$
      be a
  primitive $8^{th}$ root of unity in $GF(9)$.  Here $q=9, q+1=10$ by reference to Theorem \ref{main} or \ref{main1}. 
It is required to  construct  a $[10,r]$ mds code over $GF(9)$.

    Consider $r=4$ as an illustration; the construction for a general
      $r$ is similar.  

From the general construction above,  the following is an $[10,4]$ mds
      code. 

$\begin{pmatrix} 1 & 0 & 1 & 1 & 1 & \ldots &  1 \\ 0 & 0 & 1 & \om &\om^2 &
  \ldots &\om^7 \\ 0 &0 & 1& \om^2 & \om^4 & \ldots & \om^{14} \\ 0 &1 &
  1 & \om^3 & \om^6 & \ldots & \om^{21} \end{pmatrix}$

Note that $\om^8=1$ and some of the powers may be simplified. 

There are other possibilities by varying the matrix $A$ obtained from
      the Fourier matrix, Theorem \ref{main1}. 
For example choose $2^{nd},5^{th}, 8^{th}$ rows (rows with arithmetic
      difference $3$ and  $\gcd(3,8)=1)$) of Fourier $F_{8}$ over
      $GF(9)$ using $\om$ to get 

$A=\begin{pmatrix} 1 & \om & \om^2 &
				     \ldots & \om^7 \\ 1 & \om^4 &
				     \om^{8} & \ldots & \om^{28} \\ 1 &
				     \om^7 &\om^{6} & \ldots & \om^{1}
				     \end{pmatrix}$.

Then add the two columns $(1,0,0)\T, (0,0,1)\T$ to the front of $A$ to get a
      $3\ti 10$ matrix $B$ which is then a generator matrix for a
      $[10,3]$ mds code over $GF(9)$.

Choosing $\{2^{nd},5^{th}, 8^{th}, 11^{th}= 3^{rd}\}$ rows, by
      wrapping,  to construct $A$ and then add the two columns as before to get $B$ which is then  a $[10,4]$ mds code over $GF(3^2)$.

There are many choices. 
\item Consider $GF(3^3)$. Here the $q=27$ and $q+1=28$ from general considerations.  Construct $[28,r]$ mds codes over $GF(27)$.

Let $\om$ be a primitive $26^{th}$ root of unity in $GF(27)$. Form the Fourier $F_{26\ti 26}$ matrix over $GF(27)$ using $\om$. 

Say $r=4$ for illustration; the more general $r$ is similar. 

Form $B=\begin{pmatrix} 1&0&1&1 & 1& \ldots & 1 \\ 0&0& 1&\om &\om^2& \ldots
       & \om^{25} \\ 0&0& 1& \om^2 &\om^4 & \ldots & \om^{50} \\ 0&1 &
       1&\om^3 &\om^6 & \ldots & \om^{75} \end{pmatrix} $

(Some of the powers may be simplified on noting $\om^{26}=1$.)
$B$ is formed using the first $4$ rows of $F_{26\ti 26}$ together with $v=(1,0,0,0)\T, w=(0,0,0,1)\T$.

Then $B$ is an $[28,4]$ mds code over $GF(27)$.

To get a $[28,24]$ mds code over $GF(27)$, take $B$ as the check matrix
      of a code. Alternatively take an $[26,24]$ mds code from the Fourier $26 \ti 26$ matrix and add on the two columns $\{v,w\}$ as before. 

      There are many other ways that the $A$ could be formed as noted previously and the $B$ is obtained by adding on the two extra columns, one beginning with $1$ and the other ending with $1$ and all other entries zero.  
    \item Consider the prime field $GF(257)$. Now the order of $(3 \mod 257)$ is $256$. Thus $\om = (3 \mod 257)$ is a primitive element in $GF(257)$. Here $q=257,  q+1 = 258$ from general considerations.
      Construct $[258, r]$ mds codes over $GF(257)$ as follows.

      Form the Fourier $256\ti 256$ matrix $F_{256}$ over $GF(257)$ using $\om = (3 \mod 256)$ as the primitive element. Choose $r$ rows of $F_{256}$ chosen is arithmetic sequence with difference $k$ satisfying $\gcd(k,256)= 1$ to form a $ r \ti 256$ matrix $A$. Now add the two columns $u= (1,0,0, \ldots, 0)\T, w = (0,0,0, \ldots, 0,1)\T$ of length $r$ to the front of $A$ to form a matrix $B$. Then $B$ generates an mds $[258,r]$ code.

      This code is the maximum length code that can be formed from $GF(257)$. Note also that the arithmetic for the codes is modular arithmetic performed in $GF(257)= \Z_{257}$ and with powers of $(3 \mod 257)$ only.  Note that efficient encoding and decoding algorithms exist of complexity $\max\{O(n\log n, t^2)$ where $n=256, t = \floor{\frac{d-1}{2}}$ where $d$ is the distance which equals $257-r$.
     
\end{enumerate}  
\section{ Even $q$ and dimension $3$}

Consider $GF(q)$ where $q$ is even.  

\subsection{Sample}
Consider $GF(2^3)$ initially. Let $\om$ be a primitive $7^{th}$
root of unity in $GF(8)$. Let $A$ be the first three rows of the Fourier $7\ti 7$ matrix formed using $\om$. 

Define 
$B=\begin{pmatrix}1&0&0 & 1 &1 &1& 1&1&1 & 1 \\ 0&1&0&1&\om & \om^2 &
  \om^3&\om^4&\om^5 & \om^6 \\ 0&0&1&1&\om^2 &\om^4 &\om^6&\om &\om^3 &\om^5\end{pmatrix}$

This is $B=(u,v,w,A)$ where $u=(1,0,0)\T, v=(0,1,0)\T, w=(0,0,1)\T$ and
$A$ is the first three rows of the Fourier $7\ti 7$ matrix formed using
$\om$ as the primitive $7^{th}$ root of unity. 

Show that $B$ is an mds $[10,3]$ code over $GF(8)$ as follows. 

Now any $3\ti 3 $ submatrix of $B$ involving columns of $A$, only, has non-zero 
determinant as $A$ generates an mds code, Theorem \ref{genthmfour}, \cite{hurley}.

If any of $\{u,v,w\}$ with two columns of $A$ are used to form a $3\ti 3 $ submatrix then for it to have a zero determinant it must be   that $A$ has a $2\ti 2$ submatrix with zero determinant. It 
may be verified that no $2\ti 2$ submatrix of $A$ has zero determinant
directly using Lemma \ref{prim} below; a direct proof of a more general result which includes this  is given  in  Proposition \ref{cor1}. 

If a $3\ti 3$ matrix formed with two of $\{u,v,w\}$ together with a column of $A$ has a zero determinant
then this means that $A$ has  a zero element which it doesn't.

Thus $B$ is a generator of an mds $[10,3]$ code. 

To form an $[10,7]$ mds code we may take $B$ as the check matrix of a
$[10,3]$ code. 

There are many choices for a $3\ti 10$ matrix $A$ from the Fourier matrix
and then add on the three columns as noted to get a $B$ which is then a $[10,3]$ mds code.  
\subsection{General case }
  Consider $GF(2^n)$.  To construct $[2^n+2, 3]$ mds codes in $GF(2^n)$ consider the Fourier $F_{(2^n-1) \ti (2^n-1)}$ matrix formed using the primitive $(2^n-1)^{th}$ root of unity $\om$ in $GF(2^n)$. Take the first
  three rows, or any three rows in arithmetic sequence with difference
  $k$ satisfying $\gcd(k,2^n-1) = 1$, of this  Fourier matrix to form a $3 \ti (2^n-1)$ matrix $A$, which generates an mds code. Now form the matrix $B$ by 
  adding  the three columns of  $I_3$ to  the front (or anywhere indeed) of $A$.

  The proof that this $B$ 
  is an $[2^n+2,3]$ mds code then reduces to
  showing that this $3\ti (2^n-1)$ matrix $A$ from the Fourier matrix has no $2 \ti 2$  submatrix with determinant equal to zero. 

Using this $3\ti (2^n+2)$ matrix $B$ as a check matrix gives a $[2^n+2,2^n-1]$
mds code. 

     \begin{lemma}\label{prim} Let $\om$ be a primitive $n^{th}$ root of unity. Then
       $\det( \begin{pmatrix} \om^i & \om^j \\ \om^k & \om^l \end{pmatrix}) = 0 $ if and only if $i-k \equiv j-l \mod n$.
     \end{lemma}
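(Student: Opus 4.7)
The plan is to compute the determinant directly and then translate the resulting equation on powers of $\omega$ into a congruence on exponents using the definition of a primitive $n^{th}$ root of unity.

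First I would expand: $\det\begin{pmatrix} \omega^i & \omega^j \\ \omega^k & \omega^l \end{pmatrix} = \omega^{i+l} - \omega^{j+k}$. So the determinant vanishes if and only if $\omega^{i+l} = \omega^{j+k}$, equivalently $\omega^{(i+l)-(j+k)} = 1$.

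Next I would invoke the fact that if $\omega$ is a primitive $n^{th}$ root of unity then $\omega^m = 1$ if and only if $n \mid m$. This is immediate from the definition given in Section \ref{basic}: $\omega^n = 1$ while $\omega^i \neq 1$ for $1 \leq i < n$, so the order of $\omega$ is exactly $n$, and the standard argument (write $m = qn + s$ with $0 \leq s < n$ and deduce $\omega^s = 1$, hence $s = 0$) forces $n \mid m$. Thus $\omega^{(i+l)-(j+k)} = 1$ is equivalent to $(i+l) - (j+k) \equiv 0 \pmod{n}$, i.e. $i - k \equiv j - l \pmod{n}$, which is the desired condition.

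There is no real obstacle here: both directions follow at once from these two observations, and the lemma is essentially a restatement of the order of $\omega$ in multiplicative form. The only point worth mentioning is that the argument applies uniformly regardless of the signs or sizes of $i,j,k,l$, since the congruence $i - k \equiv j - l \pmod n$ is symmetric in the rearrangement $i + l \equiv j + k \pmod n$.
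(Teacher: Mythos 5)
Your proposal is correct and follows essentially the same route as the paper: expand the determinant, rewrite the vanishing condition as $\om^{i+l}=\om^{j+k}$, and convert to the congruence $i+l\equiv j+k \pmod n$ (hence $i-k\equiv j-l \pmod n$) using the fact that $\om$ has order exactly $n$. Your extra remark spelling out why $\om^m=1$ forces $n\mid m$ is a small addition of detail the paper leaves implicit.
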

     \begin{proof} $\det( \begin{pmatrix} \om^i & \om^j \\ \om^k & \om^l \end{pmatrix}) = 0 $ if and only if $\om^i\om^l - \om^j\om^k = 0$ if and only if $\om^{i+l} = \om^{j+k}$ if and only if $i+l \equiv j+k \mod n$ if and only if $i-k \equiv j-l \mod n$. \end{proof}

\begin{lemma}\label{odd} Suppose in a field the order of $\om$ is $t$ where
  $t=2j+1$ is odd. 
Then the matrix 

$\begin{pmatrix} 1& 1 & 1 & \ldots & 1 \\ 1 & \om & \om^2 & \ldots &
 \om^{t-1} \\ 1 & \om^2 & \om^{4} & \ldots &
\om^{2t-2}\end{pmatrix}$ 

has no $2\ti 2$ submatrix with determinant
  equal to zero. 
\end{lemma}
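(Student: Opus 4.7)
The approach is to apply Lemma \ref{prim} row-by-row and column-by-column and then exploit the fact that $t$ is odd.

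First I would index the matrix so that its $(a,b)$ entry is $\om^{(a-1)(b-1)}$ for $a\in\{1,2,3\}$ and $b\in\{1,2,\ldots,t\}$. A generic $2\times 2$ submatrix is then determined by a choice of two rows $1\le a_1<a_2\le 3$ and two columns $1\le b_1<b_2\le t$, and has entries $\om^{(a_i-1)(b_j-1)}$.

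Next I would apply Lemma \ref{prim} with $i=(a_1-1)(b_1-1)$, $j=(a_1-1)(b_2-1)$, $k=(a_2-1)(b_1-1)$, $l=(a_2-1)(b_2-1)$. The congruence $i-k\equiv j-l\pmod t$ simplifies algebraically to
$$(a_2-a_1)(b_2-b_1)\equiv 0\pmod t.$$
So the question reduces to whether this congruence can hold for any admissible choice of indices.

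The final step is a short case analysis on $a_2-a_1\in\{1,2\}$. If $a_2-a_1=1$, the condition becomes $b_2-b_1\equiv 0\pmod t$, impossible since $0<b_2-b_1<t$. If $a_2-a_1=2$, the condition becomes $2(b_2-b_1)\equiv 0\pmod t$; because $t=2j+1$ is odd, $\gcd(2,t)=1$ and this again forces $b_2-b_1\equiv 0\pmod t$, which is impossible for the same reason. Hence no $2\times 2$ submatrix is singular. The only real ``obstacle'' is remembering to use the oddness of $t$ in the $a_2-a_1=2$ case, which is precisely where the hypothesis is needed (for even $t$ one could take $a_2-a_1=2$ and $b_2-b_1=t/2$ and obtain a vanishing minor).
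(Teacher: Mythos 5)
Your proof is correct and follows essentially the same route as the paper: both reduce the singularity of a $2\times 2$ minor to the congruence of Lemma \ref{prim} and invoke the oddness of $t$ exactly where the row-exponent difference is $2$ (the pair of rows one and three). Your version is slightly more uniform, collapsing the paper's three row-pair cases into the single identity $(a_2-a_1)(b_2-b_1)\equiv 0 \pmod t$, and your closing remark correctly identifies why the conclusion fails for even $t$.
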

\begin{proof}
Note that $\om^{2i-i} = \om^i$ and $\om^{2i}=\om^{2j}$ if and only if
 $\om^{2(i-j)} =1$ if and only if $\om^{i-j}=1$ as $\om$ has odd
 order; for $i,j < t$ this implies $i=j$. 

 It is clear that there are no $2\ti 2$ submatrices formed
 from the first row and either of the second or third rows with
 determinant zero as (i) $\{\om, \om^2, \ldots \om^{t-1}\}$ are distinct and
 (ii) $\{\om^2,\om^4, \ldots, \om^{2t-2}\}$ are distinct.

It remains to show that a $2\ti 2$ submatrix formed from second and
 third row cannot have determinant equal to zero. 

Work out the differences between the powers in the third row with those
 immediately above in the second row:
 $\{0, \om, \om^2, \ldots, \om^j, \om^{j+1}, \om^{j+2}, \ldots , \om^{2j}\}$.

 
These are all different and so by Lemma \ref{prim} there is no $2\ti 2$ submatrix
 from second and third rows with determinant equal to zero.
\end{proof} 

\begin{corollary}\label{cor} Suppose $\om$ has odd order $t=2j+1$ in a field. 
Then if $A=\begin{pmatrix} 1& 1 & 1 & \ldots & 1 \\ 1 & \om & \om^2 & \ldots &
 \om^{t-1} \\ 1 & \om^2 & \om^{4}  & \ldots &
\om^{2t-2}\end{pmatrix}$ has no $3\ti 3$ submatrix with determinant zero
 then 
$B=  \begin{pmatrix}1&0&0& 1 & 1 &1& \ldots & 1 \\0&1&0 &1 & \om & \om^2 & \ldots &
 \om^{t-1} \\ 0&0&1& 1 & \om^2 & \om^{4} & \ldots &
\om^{2t-2}\end{pmatrix}$ has no $3\ti 3$ submatrix with determinant
 equal to zero. 

\end{corollary}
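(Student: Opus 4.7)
The plan is to classify any $3\times 3$ submatrix $M$ of $B$ according to how many of its three columns are drawn from the identity block $I_3$ on the left of $B$ rather than from $A$. There are four cases: zero, one, two, or three columns taken from $I_3$. In each case I reduce $\det(M)$ to a quantity already known to be nonzero, either by hypothesis, by Lemma~\ref{odd}, or by inspection.

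If all three columns come from $A$, then $M$ is a $3\times 3$ submatrix of $A$ and the hypothesis of the corollary gives $\det(M)\neq 0$ directly. If all three columns come from $I_3$, then $M$ is a permutation of $I_3$ and $\det(M)=\pm 1\neq 0$. So the real work is the two mixed cases.

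Suppose exactly one column of $M$ comes from $I_3$, say this column equals $e_i$ (the $i$-th standard basis vector). Expanding $\det(M)$ along that $e_i$-column, only the cofactor of the entry $1$ in row $i$ survives, giving $\det(M)=\pm\det(A')$ where $A'$ is the $2\times 2$ submatrix of $A$ obtained by deleting row $i$ and restricting to the two $A$-columns that sit in $M$. By Lemma~\ref{odd}, no $2\times 2$ submatrix of $A$ has zero determinant, regardless of which row was deleted (the lemma addresses exactly the three pairs of rows $\{1,2\}$, $\{1,3\}$, $\{2,3\}$). Hence $\det(M)\neq 0$.

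Suppose exactly two columns of $M$ come from $I_3$, say $e_i$ and $e_j$ with $i\neq j$, and one column is a column $a$ of $A$. Expanding along the $e_i$- and $e_j$-columns in succession kills two rows and leaves $\det(M)=\pm a_k$, where $k$ is the unique index in $\{1,2,3\}\setminus\{i,j\}$ and $a_k$ is the corresponding entry of the chosen column of $A$. But every entry of $A$ is a power of $\om$ and is therefore nonzero, so $\det(M)\neq 0$. Combining the four cases shows that no $3\times 3$ submatrix of $B$ has zero determinant.

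I do not foresee a genuine obstacle here: the argument is a routine cofactor-expansion bookkeeping exercise, and the only substantive input is Lemma~\ref{odd} (used in the one-$I_3$-column case), which has already been proved. The one thing to be careful about is tracking which rows survive after expanding along $I_3$-columns, but since only the \emph{vanishing} of the resulting minor matters, the signs introduced by row permutations play no role.
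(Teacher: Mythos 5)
Your proposal is correct and follows essentially the same route as the paper's own proof: the identical four-way case split on how many columns of the submatrix come from the identity block, cofactor expansion along those columns, and an appeal to Lemma~\ref{odd} in the one-identity-column case and to the nonvanishing of the entries of $A$ in the two-identity-column case. No substantive difference.
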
 

\begin{proof} 
If the $3\ti 3$ submatrix is taken from $A$, only, then know it has non-zero
 determinant as $A$ generates an mds code. If the submatrix  involves one of the first three columns of $B$ then
 by expanding along this column its determinant is zero if and only if
 $A$ has a $2\ti 2$ submatrix whose determinant is zero; by Lemma
 \ref{odd} this cannot happen. If it involves two of the first three
 columns of $B$ then by expansion by these columns in order its
 determinant is non-zero as $A$ has no zero entry. If it involves all
 three of the first columns of $B$ then obviously the determinant  is $1$
 and is non-zero.  
\end{proof} 

\begin{proposition}\label{cor1} Form the Fourier matrix $F_{2^n-1}$
  from a primitive $2^n-1$ root of unity in $GF(2^n)$. Let $A$ be the matrix of the first three rows of $F_{2^n-1}$ and $B$ the matrix formed by adding the columns of $I_3$ to $A$. Then $B$ generates an mds $[2^n+2,3]$ code.
\end{proposition}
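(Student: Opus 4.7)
The plan is to reduce this statement directly to the machinery already developed in Lemma \ref{odd} and Corollary \ref{cor}, observing that the hypotheses of those results are automatically satisfied in the $GF(2^n)$ setting.

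First I would verify the dimensions: $A$ is $3\times(2^n-1)$ and adjoining the three columns of $I_3$ produces a $3\times(2^n+2)$ matrix $B$, which is the correct size to generate a $[2^n+2,3]$ code. The mds property amounts to showing that every $3\times 3$ submatrix of $B$ has nonzero determinant.

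Next I would invoke the key numerical fact that underlies everything: since $2^n$ is even, the order $t=2^n-1$ of the primitive root $\omega$ used to build $F_{2^n-1}$ is \emph{odd}, so $t=2j+1$ for $j=2^{n-1}-1$. This is precisely the hypothesis of Lemma \ref{odd}, which therefore applies to the $3\times(2^n-1)$ matrix $A$ and tells us that no $2\times 2$ submatrix of $A$ has zero determinant. Separately, $A$ consists of three consecutive rows of the Fourier matrix $F_{2^n-1}$ (arithmetic difference $k=1$ with $\gcd(1,2^n-1)=1$), so by Theorem \ref{genthmfour}(i) it generates an mds $[2^n-1,3]$ code; equivalently, every $3\times 3$ submatrix of $A$ has nonzero determinant. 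Finally, every entry of $A$ is a power of $\omega$ and hence nonzero.

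With these three facts in hand, Corollary \ref{cor} applies verbatim to $A$ and produces the conclusion: every $3\times 3$ submatrix of $B$ has nonzero determinant, so $B$ generates an mds $[2^n+2,3]$ code. The only substantive observation in this proof is the parity of $2^n-1$, which triggers Lemma \ref{odd}; beyond that, no further calculation is required and there is no real obstacle to grind through, since the case analysis (three columns of $I_3$, two columns of $I_3$ plus one from $A$, one column of $I_3$ plus two from $A$, or three columns from $A$) has already been handled once and for all in the proof of Corollary \ref{cor}.
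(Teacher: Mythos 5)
Your proposal is correct and follows essentially the same route as the paper, which simply cites Lemma \ref{odd} and Corollary \ref{cor}; you have merely made explicit the two facts the paper leaves implicit, namely that $2^n-1$ is odd (so Lemma \ref{odd} applies) and that $A$ generates an mds code by Theorem \ref{genthmfour} (so the hypothesis of Corollary \ref{cor} on $3\times 3$ submatrices holds). No substantive difference.
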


\begin{proof} This follows from Lemma \ref{odd} and Corollary \ref{cor}.
\end{proof}

The matrix $A$ in the construction may be formed by taking three rows of
the Fourier matrix, as in Proposition  \ref{cor1}, in arithmetic sequence with difference
$k$ satisfying $\gcd(k,2^n-1)=1$. Then adding in the columns of the
matrix $I_3$ gives
an mds $[2^n+2,3]$ code over $GF(2^n)$. 

By taking the matrix $B$ as the check matrix of a code,  an
$[2^n+2,2^n-1]$ mds code is obtained. 

For even $q$ any $[n,3]$ mds code over $GF(q)$ must satisfy $n\leq q+2$; this is shown in Proposition \ref{jump}. Thus the $[q+2,3]$ mds codes and $[q+2,q-1]$ produced here are best possible length over the field $GF(q)$ for those  dimensions and  even $q$. 


That efficient encoding and decoding
methods are available by adapting the methods of \cite{hurley} is a great advantage.  The complexity of encoding and decoding is $\max\{n\log n, t^2\}$ where $n$ is the length and $t = \floor{\frac{d-1}{2}}$ with distance  $d$.

     \subsection{General $[n,3], [n,n-3]$ } Suppose $G$ is a generator matrix for an
     mds $[n,r]$ code. Then the row-reduced echelon form of $G$ is $(I_r,A)$. Now all the entries in $A$ must be non-zero for
     otherwise an  $r\ti r$ submatrix exists with zero determinant.

     \begin{theorem}\label{2by3} Let $G= (I_r,A)$ be the generator matrix of an $[n,r]$ mds code. Then no $j\ti j$ submatrix of $A$ has $\det = 0$ for $j\leq r$.
     \end{theorem}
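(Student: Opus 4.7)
The plan is to reduce the claim directly to the MDS condition on $G$ itself, by extending any chosen $j\times j$ submatrix of $A$ to a full $r\times r$ submatrix of $G$ using standard basis columns from $I_r$.

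First I would fix notation. Suppose $B$ is a $j\times j$ submatrix of $A$ using the row set $R=\{i_1<\cdots<i_j\}\subseteq\{1,\ldots,r\}$ and some column set $C\subseteq\{r+1,\ldots,n\}$ of size $j$ (with the columns of $G$ indexed so that the first $r$ are the columns of $I_r$). Let $R'=\{1,\ldots,r\}\setminus R$, so $|R'|=r-j$. The columns of $I_r$ indexed by $R'$ are precisely the standard basis vectors $e_{i'}$ for $i'\in R'$.

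Next I would build the extension. Consider the $r\times r$ submatrix $M$ of $G$ consisting of all $r$ rows of $G$ and the column set $R'\cup C$ (where the columns of $R'$ are taken from $I_r$ and the columns of $C$ from $A$). Because $G$ generates an mds $[n,r]$ code, every $r\times r$ submatrix of $G$ is non-singular, so $\det(M)\neq 0$.

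Then I would compute $\det(M)$ by Laplace expansion along the $r-j$ basis columns drawn from $I_r$. Each such column $e_{i'}$ has a unique non-zero entry (a $1$ in row $i'$), so repeated cofactor expansion successively eliminates the rows in $R'$ and the basis columns, yielding $\det(M)=\pm\det(B)$. Combined with $\det(M)\neq 0$, this gives $\det(B)\neq 0$, completing the proof. The only real care required is bookkeeping: verifying that the rows surviving the cofactor expansions are exactly those in $R$ (so what is left is genuinely the submatrix $B$), and tracking the overall sign — neither of which is a genuine obstacle, since we only need non-vanishing, not the exact sign.
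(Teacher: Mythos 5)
Your proof is correct and complete. Note that the paper itself offers no argument here: Theorem \ref{2by3} is stated without proof, and after the stronger two-sided version (Theorem \ref{2by4}) the paper simply says ``The proof is not included,'' so there is nothing to compare your argument against line by line. What you give is the standard argument and it works: you extend a chosen $j\times j$ submatrix $B$ of $A$ (rows $R$, columns $C$) to an $r\times r$ submatrix $M$ of $G$ by adjoining the $r-j$ standard basis columns $e_{i'}$, $i'\notin R$, from the $I_r$ block; the MDS hypothesis, via the characterization quoted in Section \ref{basic} that every $r\times r$ submatrix of a generator matrix of an MDS code is non-singular, gives $\det(M)\neq 0$; and repeated cofactor expansion along the unit columns strips out exactly the rows in the complement of $R$, leaving $\det(M)=\pm\det(B)$, hence $\det(B)\neq 0$. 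The only ingredient you rely on beyond bookkeeping is the submatrix characterization of MDS generator matrices, which the paper has already stated, so your argument legitimately fills the gap the paper leaves open; it also immediately gives the ``only if'' direction of Theorem \ref{2by4}, and reversing it (a singular $r\times r$ submatrix of $G$ must, after peeling off whatever unit columns it contains, come from a singular square submatrix of $A$) yields the converse as well.
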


In fact: 

\begin{theorem}\label{2by4} Let $G= (I_r,A)$ be the generator matrix of
 an $[n,r]$ code. The code is an mds code  if and only no $j\ti j$ submatrix of $A$ has $\det = 0$ for $j\leq r$.
     \end{theorem}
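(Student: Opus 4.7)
The plan is to reduce to the standard characterization recalled in Section \ref{basic}: a $k \times n$ matrix generates an mds $[n,k]$ code if and only if every $k \times k$ submatrix has nonzero determinant. Writing $G = (I_r, A)$, I would analyze an arbitrary $r \times r$ submatrix $M$ of $G$ by grouping its columns as $s$ columns drawn from the $I_r$ block (hence standard basis vectors $e_{i_1}, \ldots, e_{i_s}$) together with $r-s$ columns drawn from $A$. Iterated Laplace expansion along the identity-block columns collapses $M$ to, up to a sign, the $(r-s) \times (r-s)$ submatrix $A'$ of $A$ obtained by deleting rows $i_1, \ldots, i_s$ and restricting to the chosen $r-s$ columns of $A$. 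When $s = r$ the residual matrix is empty with determinant $1$.

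For the forward implication, which subsumes Theorem \ref{2by3}, I would start from any $j \times j$ submatrix $A'$ of $A$ with $1 \leq j \leq r$ and pad it out to an $r \times r$ submatrix $M$ of $G$ by appending the $r - j$ basis vectors $e_i$ whose indices are exactly the rows not used by $A'$. The reduction above then gives $|\det M| = |\det A'|$, so $\det A' \neq 0$ whenever $G$ generates an mds code.

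For the reverse implication, I would run the same reduction in the forward direction: any $r \times r$ submatrix $M$ of $G$ satisfies $|\det M| = |\det A'|$ for some $(r-s) \times (r-s)$ submatrix $A'$ of $A$ with $r - s \leq r$ (or $|\det M| = 1$ when $s = r$). Hence the hypothesis that no $j \times j$ submatrix of $A$ is singular for $1 \leq j \leq r$ forces $\det M \neq 0$ for every choice of $M$, and so $G$ generates an mds code by the characterization above.

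The argument is largely bookkeeping; the main obstacle is not conceptual but notational, namely verifying that iterated Laplace expansion along the $s$ standard-basis columns does indeed produce the submatrix of $A$ obtained by deleting rows $i_1, \ldots, i_s$ (rather than some altered matrix), and that the overall sign is irrelevant for the nonvanishing conclusion. Once the reduction $|\det M| = |\det A'|$ is established, both implications are immediate from the single characterization recalled in Section \ref{basic}.
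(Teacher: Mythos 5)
Your proposal is correct and complete. The paper explicitly omits a proof of Theorem \ref{2by4}, so there is nothing to compare against directly; your argument supplies the missing proof via the standard route, namely the characterization (quoted in Section \ref{basic}) that $G$ generates an mds code iff every $r \times r$ submatrix of $G$ is nonsingular, together with the two-way correspondence, via Laplace expansion along the chosen identity columns, between $r \times r$ submatrices of $(I_r, A)$ and $j \times j$ submatrices of $A$ on the complementary rows. This is the same expansion-along-unit-columns technique the paper itself uses in the proofs of Theorem \ref{main} and Corollary \ref{cor}, so it is evidently the intended argument; the only cosmetic point is that $|\det M| = |\det A'|$ should be read as ``equal up to sign'' since absolute values are not meaningful over $GF(q)$, which does not affect the nonvanishing conclusion.
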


The proof is not included.

\begin{proposition}\label{jump} Suppose $[n,3]$ is an mds code over $GF(q)$. Then
  $n\leq (q+1)$ when $q$ is odd and $n\leq (q+2)$ when $q$ is even. \end{proposition}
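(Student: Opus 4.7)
The plan is to reinterpret the MDS condition as an arc condition in $GF(q)^3$ and then apply a pigeonhole count followed by a parity argument in the odd case.

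First, by Theorem~\ref{2by4}, I would put $G$ in the form $(I_3\mid A)$ where no $j\times j$ submatrix of $A$ is singular for $j\leq 3$, and view the $n$ columns of $G$ as nonzero vectors $v_1,\ldots,v_n\in GF(q)^3$. The MDS property is exactly that every three of these vectors are linearly independent; in particular no two are scalar multiples, so they determine $n$ distinct one-dimensional subspaces, with no three lying in a common two-dimensional subspace (an $n$-arc in $PG(2,q)$).

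Next, to obtain the universal bound $n\leq q+2$, I would fix $v_1$ and exploit the fact that there are exactly $q+1$ two-dimensional subspaces of $GF(q)^3$ containing $v_1$, since these correspond to the one-dimensional subspaces of the quotient $GF(q)^3/\langle v_1\rangle\cong GF(q)^2$. Each of $v_2,\ldots,v_n$ lies in exactly one such subspace, and two of them cannot share a subspace through $v_1$ without forcing $v_1$ and that pair to be linearly dependent. Thus $n-1\leq q+1$, which already settles the even case.

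For the improvement to $n\leq q+1$ when $q$ is odd, I would argue by contradiction from $n=q+2$. Then through each $v_i$ every one of the $q+1$ two-dimensional subspaces containing $v_i$ must contain a second arc vector; equivalently, \emph{no two-dimensional subspace meets $\{v_1,\ldots,v_n\}$ in exactly one point}. Choosing any $u\in GF(q)^3\setminus\{0\}$ that is not a scalar multiple of any $v_i$ — available because $GF(q)^3$ has $q^2+q+1$ one-dimensional subspaces and $q^2+q+1>q+2$ — the $q+1$ two-dimensional subspaces through $u$ partition $\{v_1,\ldots,v_n\}$ into groups of size $0$ or $2$, forcing $n$ to be even. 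But $n=q+2$ is odd when $q$ is odd, a contradiction, so $n\leq q+1$.

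The main obstacle is the parity step: its crux is the observation that a $(q+2)$-arc can have no tangent anywhere in $PG(2,q)$, which itself is a direct consequence of the pigeonhole bound being tight through every arc point. Once that is in place, the odd parity of $q+2$ closes the argument in one line.
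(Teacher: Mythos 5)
Your proposal is correct, but it takes a genuinely different route from the paper. You prove the statement by the classical finite-geometry argument: view the $n$ columns as an $n$-arc in $PG(2,q)$, get $n\leq q+2$ by counting the $q+1$ lines through one arc point (each carrying at most one further arc point), and for odd $q$ rule out $n=q+2$ by observing that a $(q+2)$-arc has no tangent lines, so the $q+1$ lines through an external point partition the arc into pairs, forcing $q+2$ to be even. All steps check out: the tangent-free property does follow from the pigeonhole bound being tight at every arc point, an external point exists since $q^2+q+1>q+2$, and the parity contradiction is immediate. (Your appeal to Theorem~\ref{2by4}, whose proof the paper omits, is not actually needed; the standard fact quoted in Section~\ref{basic} that every $3\times 3$ submatrix of a generator matrix of an mds code is nonsingular already gives the arc property.) The paper instead argues computationally: it row-reduces to $(I_3,A)$, writes every entry of $A$ as a power of a primitive element $\om$, normalizes the first row of $A$ to all $1$'s, deduces that the exponents in each of rows two and three form a permutation of $\{0,1,\ldots,q-2\}$ (giving $n\leq q+2$), and then, via the $2\times 2$ determinant criterion of Lemma~\ref{prim}, shows the columnwise differences of these exponents must also be a permutation of $\{0,\ldots,q-2\}$; comparing the sum of the differences ($\equiv 0$) with the sum of a full permutation ($\equiv \frac{q-1}{2}$ modulo $q-1$) gives the contradiction for odd $q$. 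Your argument is more conceptual and in fact shows more generally that a projective plane of odd order admits no $(q+2)$-arc, independent of the multiplicative structure of $GF(q)$; the paper's argument is more elementary in the sense that it stays entirely within explicit determinant computations and meshes with the Lemma~\ref{prim} machinery used in its constructions, directly exhibiting a singular $2\times 2$ submatrix rather than a parity obstruction.
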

\begin{proof} 
  Let $\om$ be a primitive element in
     $GF(q)$ and thus $\om$ is a primitive $(q-1)$ root of unity. 
     By row operations a generator matrix for the $[n,3]$ mds code has the form
$G=(I_3,A)$ where $A$ is a $3\ti (n-3)$ matrix. If any zero appears in $A$
     then there exists a $3\ti 3$ submatrix with determinant equal to
     zero and then the code would not be an mds code. Hence  
     
     $G= \left(\begin{array}{ccccccc} 1 & 0&0 & \om^{0,1} & \om^{0,2} &\ldots &\om^{0,n-3} \\ 0&1&0 & \om^{1,1}& \om^{1,2} & \ldots & \om^{1,n-3} \\ 0&0&1 & \om^{2,1} & \om^{2,2} &\ldots & \om^{2,n-2} \end{array}\right)$

     where the $\om^{i,j}$ are powers of the primitive element $\om$.
        We are considering $3\ti 3$ submatrices and their determinants so
     we can consider that the first row of the $A$ part of  $G$ consists of $1$'s: 

     $G = \left(\begin{array}{ccccccc} 1&0&0&1&1&\ldots &1 \\  0&1&0 & \om^{1,1}& \om^{1,2} & \ldots & \om^{1,n-3} \\ 0&0&1 & \om^{2,1} & \om^{2,2} &\ldots & \om^{2,n-3} \end{array}\right)$

     If there is a repeat in any of the second or third rows then together with a $(1,1)$ from the first row this gives a $2\ti 2 $ submatrix of $A$ 
     with determinant $0$. Thus the second and third  rows of the $A$ part of $G$ contain all the elements
     $\{1,\om, \om^2,\ldots, \om^{q-2}\}$ once only.

     Thus $n\leq q+2$ in all cases. 

     No element is repeated in second row and no element is repeated in third row and all powers of $\om$ appear in second and third rows of the $A$ part of $G$.

     Suppose now $q$ is odd and $n=q+2$. Then the sum of the powers of $\om$ in rows $2,3$ is congruent to $\frac{q-1}{2} \mod q$. When subtracting the powers of third row from the powers of the second row above then all the powers must appear or otherwise  a $2\ti 2$ submatrix from second and third rows with determinant $0$ is obtained. Thus the subtraction must result in all the powers appearing and hence the sum of these powers is $\equiv \frac{q-1}{2} \mod q$. But since each of the second and third row sums is $\equiv \frac{q-1}{2} \mod q$, the subtraction results in powers summing to $\equiv 0 \mod q$. Therefore in all cases there exists a $2\ti 2 $ submatrix of $A$ part
     with determinant $0$. Then $G$ has a $3\ti 3$ submatrix with determinant $0$.

     Thus $n\leq q+1$ when $q$ is odd. 

\end{proof}



In summary then we get the following for dimension $3$. Let $ \C$ be an $[n,3]$ or an $[n,n-3]$ mds code over a  finite field $GF(q)$. Then \begin{enumerate} 
\item If $q$ is odd then $n\leq (q+1)$. For each such odd $q$, series of examples of $[q+1,3]$ and $[q+1,q-2]$ mds codes with efficient decoding algorithms are constructed using Theorem \ref{main1} of Section \ref{basic}.
\item If $q$ is even then $n\leq q+2$. For each such even $q$, series of examples of $[q+2,3]$ and $[q+2,q-1]$ mds codes with efficient decoding algorithms are constructed using Proposition \ref{cor1}.
  \end{enumerate}
 


       
     
     

\begin{thebibliography}{99}
\bibitem{ald}  T. L. Alderson , A. A. Bruen, R. Silverman, `Maximum distance separable codes and arcs in projective
spaces',  Journal of Combinatorial Theory, Series A, vol. 114, no. 6, pp. 1101-1117, 2007.
\bibitem{Ball} S. Ball, `On sets of vectors of a finite vector space in which every subset of basis size is a basis, '  J. Eur. Math. Soc, vol. 14, no. 3, pp. 733-748, 2012.
\bibitem{Beule} S. Ball, J. De Beule, `On sets of vectors of a finite vector space in which every subset of basis size is a
basis II',  Des. Codes Cryptogr., vol. 65, no. 1-2, pp. 5-14, 2012.
     \bibitem{blahut} Richard E.\ Blahut, {\em Algebraic Codes for
  data transmission}, Cambridge University Press, 2003.

\bibitem{blok} A. Blokhuis, A. A. Bruen, J. A. Thas, `Arcs in PG(n, q), MDS-codes and three fundamental problems of
B. Segre, -- some extensions', Geom. Dedicata, vol. 35, no. 1-3, pp. 1-11, 1990.





\bibitem{hirs} J. W. P. Hirschfeld, J. A. Thas, {\em General Galois Geometries}, Clarendon Press, Oxford (1991). 

\bibitem{hirs1} J. W. P. Hirschfeld, L. Storme,  `The packing problem in statistics, coding theory and finite projective
spaces: update 2001',  in: Developments in Mathematics, vol. 3. Kluwer Academic Publishers. Finite
Geometries, Proceedings of the Fourth Isle of Thorns Conference, pp. 201-246.
   
\bibitem{hurley1} Barry Hurley, Ted Hurley ``Systems of mds codes from units and idempotents '', Discrete Math. 335, 81-91, 2014. 
\bibitem{hur1} Paul Hurley and Ted Hurley, ``Codes from zero-divisors
  and units in group rings''; Int. J. of Information and Coding
  Theory, Vol. 1, 1, 57-87, 2009. 
\bibitem{hurley} Ted Hurley, Donny Hurley, ``Coding theory: the
     unit-derived methodology'', Int. J. Information and Coding
     Theory, Vol. 5, no.1, 55-80, 2018. 


  
\bibitem{Mac} F. J. MacWilliams and N. J. A. Sloane, {\em  The theory of error-correcting codes},  Elsevier, 1977.
\bibitem{segre} B. Segre, ``Curve razionali normali ek-archi negli spazi finiti'', Ann. Mat. Pura Appl., vol. 39, pp. 357-379, 1955. 
\bibitem{Thas} J. A. Thas, ``Finite geometries, varieties and codes'',  in: Proceedings of the International Congress of
Mathematicians, Extra vol. III, Berlin, pp. 397-408, 1998 (electronic).


     \end{thebibliography}
\end{document}